\begin{document}
%
\title{Classical and Quantum Distributed Algorithms for the Survivable Network Design Problem}



\author[1,2,3]{Phillip A. Kerger}

\author[2,3,4]{David E. Bernal Neira}

\author[2,3]{Zoe Gonzalez Izquierdo}

\author[3]{Eleanor G. Rieffel}

\affil[1]{Department of Applied Mathematics and Statistics, Johns Hopkins University, Baltimore MD}

\affil[2]{Research Institute of Advanced Computer Science, USRA, Mountain View CA}

\affil[3]{Quantum Artificial Intelligence Laboratory, NASA Ames Research Center, Mountain View CA}

\affil[4]{Department of Chemical Engineering, Purdue University, West Lafayette IN}

\maketitle

\begin{abstract}
We investigate distributed classical and quantum approaches for the survivable network design problem (SNDP), sometimes called the generalized Steiner problem. These problems generalize many complex graph problems of interest, such as the traveling salesperson problem, the Steiner tree problem, and the k-connected network problem. To our knowledge, no classical or quantum algorithms for the SNDP have been formulated in the distributed settings we consider. We describe algorithms that are heuristics for the general problem but give concrete approximation bounds under specific parameterizations of the SNDP, which in particular hold for the three aforementioned problems that SNDP generalizes. We use a classical, centralized algorithmic framework first studied in \cite{GoemansBertsimas1993SurvivableNetworks} and provide a distributed implementation thereof. Notably, we obtain asymptotic quantum speedups by leveraging quantum shortest path computations in this framework, generalizing recent work of \cite{Kerger2023_mind_the_O-tilde}. These results raise the question of whether there is a separation between the classical and quantum models for application-scale instances of the problems considered.
\end{abstract}


%
\IEEEpeerreviewmaketitle

\section{Introduction}
The survivable network design problem (SNDP), also referred to as the generalized Steiner problem, is a remarkably general NP-hard problem that encompasses many graph optimization problems of interest as special cases, such as the traveling salesperson problem (TSP henceforth), the $k$-connected network problem, the Steiner tree problem, and thus also the $s$-$t$ shortest path and minimum spanning tree (MST) problems. In the SNDP, each node is prescribed some \textit{connectivity number}, and the goal is to find the minimum cost multigraph such that for all $k$,  all nodes with connectivity number $\geq k$ have at least $k$ edge-disjoint paths between them. Thus, in a solution to an SNDP, if any $k$ edges are deleted, nodes with connectivity number greater than $k$ remain connected in the network. Hence, the SNDP serves to model situations in which particular nodes remaining connected is critical, with that connectivity guaranteed to withstand, or \textit{survive}, some failures in the network.   

Distributed computing naturally models some network of processors that each can compute and communicate with one another, so it is natural to consider the SNDP in a distributed model of computation, as we do in this paper.
The classical CONGEST-CLIQUE Model (cCCM henceforth) in distributed computing has been studied as a model central to the field, e.g., \cite{TowardsCCM_Complexity_KorhonenSumoela2018, Saikia_Karmakar2019_SteinerTree_CONGESTCLIQUE, fischer2021_DMST, Lenzen2012_OptimalRoutingSortingInCongestClique, Dolev_lenzen_2012TriTA, MST_in_O1_CCM_Nowicki2019}, but to the best of the authors' knowledge, no algorithms for the SNDP have appeared in the CONGEST-CLIQUE or the closely related CONGEST model. Here we describe algorithms in those models, and also in the quantum CONGEST-CLIQUE model. The quantum algorithm exhibits asymptotic quantum advantage in terms of rounds of communication. 

In the cCCM, each node in the network has some input information, but no node has global information. The processors at the network nodes must work together to solve a problem 
under significant communication limitations, described in detail in \S \ref{sec: background}. 
For example, the cCCM can model a network of aircraft or spacecraft, satellites, and control stations, separated by large distances, that have severely limited communication between them.
Solving classical graph problems such as minimum spanning trees, shortest paths, matchings, Steiner trees, and more on such networks may provide helpful information about the network. For example, if each edge weight represents the cost of sending an item between its incident nodes, then the shortest (weighted) path from a node $u$ to another node $v$ gives the minimum cost path to send an item from $u$ to $v$. 
The cCCM aims to model networks in which the available communication bandwidth between nodes primarily limits the process of solving such problems. 
A good introduction to cCCM, and models with limited communication bandwidth more generally, 
can be found in \cite{DistGraphAlgos_LectureNotes_Ghaffari}.
The quantum version of the cCCM, the quantum CONGEST-CLIQUE Model (qCCM), as well as the quantum CONGEST model, have been the subject of recent research \cite{IzumiLeGallMag2019_APSP_QuantumDist, CensorHillel2022_QuantDistCliqueDetect, vanApeldoorn2022_DistQuantQueriesCONGEST, Elkin2014_NoQuantumSpeedups, Kerger2023_mind_the_O-tilde} seeking to understand how quantum communication may help in limited communication bandwidth distributed computing frameworks. 

\subsection{Contributions and Relation to Existing Work} 
Algorithms for the SNDP have been thoroughly studied in the classical, centralized setting (e.g., \cite{Goemans_williamson_1992_constrained_forest_problems}, \cite{Jain1998_2_approx_general_steiner}, \cite{Williamson1993_primal_dual_approx_for_general_steiner}, with the work of \cite{GoemansBertsimas1993SurvivableNetworks} of particular relevance providing algorithms that are, in general, heuristics, but give provable approximation bounds under various specific settings of the connectivity numbers. 
The primary contribution of this paper is to provide distributed algorithms for the SNDP in both the classical and quantum CONGEST-CLIQUE models. These algorithms can be viewed as distributed implementations of the \textit{tree heuristic} in \cite{GoemansBertsimas1993SurvivableNetworks}. 
The algorithm is thus also a heuristic in general, but inherits the same approximation ratios for special cases of the SNDP, summarized in table \ref{table: sndp guarantees}. In particular, for the special cases of Steiner trees, TSP, and the $k$-connected subnetwork problem, the approximation ratio is less than $2$. 
The algorithms use shortest path computations and routing tables as subroutines, 
taking the approach of \cite{GoemansBertsimas1993SurvivableNetworks} to use spanning trees on the complete distance graph of the input graph. 
Using this larger algorithmic framework and building a distributed implementation thereof, the bottleneck in the algorithms is the computation of shortest path distance and routing tables (which store the information about the shortest paths in a distributed manner). The best complexities 
are achieved by using the respectively fastest APSP algorithms in the classical and quantum CONGEST-CLIQUE models. In the classical model, this results in a complexity of $\tilde{O}(n^{1/3})$ using the algorithms of \cite{CensorHillel2016_AlgebraicMethodsCCM_fast_APSP} for the APSP subroutine, and thus also for the whole algorithm, while in the quantum model our method inherits a speedup by using the approach of \cite{IzumiLeGallMag2019_APSP_QuantumDist} with routing tables as in \cite{Kerger2023_mind_the_O-tilde}. While the practicality of these two algorithms may be limited (see \cite{Kerger2023_mind_the_O-tilde}) because the asymptotic advantage only kicks in a impractically large graph sizes, the formulation of asymptotically faster algorithms is the starting point from which to explore the possibility of finding practical faster algorithms. In particular, with the great generality that the SNDP provides, faster quantum algorithms for this problem would be valuable.  

Furthermore, the \textit{improved} tree heuristic of \cite{GoemansBertsimas1993SurvivableNetworks} depends on computing minimum weight maximal matchings and may be viewed as a generalization of Christofides' algorithm for the TSP \cite{Christofides1976WorstCaseAO} to the SNDP. Hence, it would be of future interest to formulate faster algorithms for the min weight max matching problem in the distributed setting. That would allow for improvements to our algorithm, and also have its own merit especially in other directions, including to the decoding of quantum error correction codes (see, e.g., \cite{Ambainis2005QuantumMatchingAndNetworkFlows} for seminal work, and \cite{Wu2023Fusion_blossom} for a more recent state-of-the-art algorithm). 
A family of problems studied in \cite{Goemans_williamson_1992_constrained_forest_problems} (in a centralized setting) called Constrained Forest Problems exhibits a broad generality, going beyond SNDP to generalize the minimum-weight-perfect-matching problem, along with Steiner trees (thus MSTs and $s$-$t$ shortest paths), TSP, and the $T$-join problem. A distributed implementation of this algorithm, however, seems much more difficult, but would be valuable work to pursue in the future. The challenge in distributing this algorithm is that a forest is maintained through iterations of the algorithm, each of which adds one edge to the forest. Adding such an edge effectively causes the other edges to reweight, affecting which edges will be added. Hence, one cannot add edges to the forest simultaneously while maintaining the same guarantees, which stifles an efficient distributed implementation.  


\section{Background}\label{sec: background}
Let us first introduce the notation to be used throughout this paper.
\subsection{Notation}
For an integer-weighted graph $G = (V, E, W)$,
we will denote $n := |V|, m := |E|,$ and $W_e$ the weight of an edge $e \in E$ throughout the paper. These will always refer to the relevant objects of the problem input graph, unless otherwise specified.
Let $\delta(v) \subset V$ be the set of edges incident on node $v$, and $\mathcal{N}_{G}(u) := \{v: uv \in E\}$ the neighborhood of $u\in G$.
Denote by $d_G(u,v)$ the shortest-path distance in $G$ from $u$ to $v$.
All logarithms are in base 2 unless specified.

We begin by defining the survivable network problem in its full generality and specifying how it generalizes other notable problems.
\begin{definition}[Survivable Network Design Problem]$\phantom{x}$ \\ \smallskip
Given a weighted, undirected graph $G=(V, E, W)$ and a {\em connectivity number} $r_v$ for each node $v\in V$, find a minimum weight subgraph such that any pair of nodes $(u,v)$ is connected by $min\{r_u, r_v\}$ edge-disjoint paths, allowing the subgraph to be a multigraph, i.e. allowing multiple copies of edges of $G$. The solution subgraph is called the minimum weight survivable network.
\end{definition}
The practical interpretation of the SNDP is that in a real network, for a solution to an SNDP, if $k$ links between nodes fail, all nodes with connectivity number $\geq k$ remain connected in the subgraph, that is, subnetworks of nodes with connectivity greater than $k$ can tolerate $k$ edge failures and remain connected. For simplicity of terminology, we will simply use the term subgraph to include a subgraph that is a also multigraph (i.e., a subgraph of $G$ but allowing for edges to appear multiple times). The solutions we build for the SNDP will be precisely such subgraphs.  \\
This generalizes the following problems as follows: 
\begin{itemize}
    \item[i)] Steiner tree with Steiner nodes $Z$: Set $r_z=1$ for all $z\in Z$.  
    \item[ii)] TSP, under the condition that the graph satisfies the triangle inequality: Set $r_v =2$ for all $v\in V$.
    \item[iii)] $k$-connected network problem: Set $r_v = k$ for all $v\in V$.
    \item[iv)] Minimum spanning tree: Set $r_v = 1$ for all $v\in V$. 
    \item[v)] $s$-$t$ shortest paths: Set $r_s = 1, r_t = 1$. 
\end{itemize}

Our algorithms will only approximately solve the SNDP. To be precise, we mean the following:
\begin{definition}
    Given a weighted graph $G = (V, E, W)$, let $S^*$ be the solution to an instance of the SNDP, with edge $e\in E$ appearing $k_{e}$ times in $S^*$. A subgraph (and multigraph) $\hat{S}$ with edge $e$ appearing $\hat{k}_{e}$ times that satisfies the connectivity requirements given by the SNDP instance and has 
    $$
    \sum_{e\in E}\hat{k}_e \cdot W_e \leq \gamma \sum k_e \cdot W_e
    $$
    is a $\gamma$-approximate solution to the SNDP.
\end{definition}

The algorithms presented in this paper are largely dependent on shortest path computations. However, as is often the case, it will not be sufficient to calculate the shortest path distance $d_G (u,v)$ for each pair of nodes $(u,v)\in V\times V$;  information about the shortest \textit{path}, i.e. about the path from $u$ to $v$ achieving that distance, is also needed. The shortest path information will be distributedly stored as so-called \textit{routing tables}.

\begin{definition}\label{def: routing table}
A \textit{routing table} for a node $v$ is a function $R_v: V \to V$ mapping a vertex $u$ to the first node visited in the shortest path going from $v$ to $u$ other than $v$ itself.
\end{definition}

Hence, if nodes have access to their routing tables, they only have knowledge of the first step of each shortest path connecting them to other nodes, but not the full path information. However, for many situations this is enough. Consider, for example, a message being sent from node $v$ to node $u$ that should be sent along the shortest path. Let $v t_1 t_2 ... t_k u$ be the shortest path with intermediate nodes $t_1, ..., t_k$. $v$ only knows from its routing table that $t_1$ is the first node along the path to $u$, i.e. $R_v(u) = t_1$. Node $v$ sends the message first to node $t_1$, which in turn uses its routing table to determine it should pass it on to $t_2$, and so on until $t_k$ passes the message on to $u$.

\subsection{The CONGEST and CONGEST-CLIQUE Models of Distributed Computing}
The standard CONGEST model considers a graph of $n$ processor nodes whose edges also serve as communication channels. Initially, each node knows only its neighbors in the graph and associated edge weights.
Each processor node executes computation locally in rounds and then communicates with its neighbors before performing further local computation.
The congestion limitation restricts this communication, with each node able to send only one message of $O(\log(n))$ classical bits in each round to its neighbors, though the messages to each neighbor may differ. 
Since there are $n$ nodes, assigning them ID labels $1, ..., n$ means the binary encoding size of a label is $\ceil{\log(n)}$ bits -- i.e., each message in CONGEST of $O(\log n)$ bits can contain roughly the amount of information to represent one node ID.
In the cCCM, we separate the communication graph from the problem input graph by allowing all nodes to communicate with each other. However, the same $O(\log(n))$ bits-per-message congestion limitation remains.
Hence, a processor node could send $n-1$ different messages to the other $n-1$ nodes in the graph, with a single node distributing up to $O(n(\log(n))$ bits of information in one round.
Taking advantage of this way of dispersing information to the network is paramount in many efficient CONGEST-CLIQUE algorithms.
The efficiency of algorithms in these models is commonly measured in terms of the \textit{round complexity}, the number of rounds of communication used in an algorithm to solve the problem in question. 

\subsection{Quantum CONGEST-CLIQUE}The quantum CONGEST-CLIQUE model (qCCM) is obtained by the simple modification of allowing nodes to send quantum bits, \emph{qubits}, to one another and enabling the processors in the network to compute with those qubits  (see \S \ref{subsubsec: qc} below for more details). Specifically, in the communication step, each node may send a (possibly different) register of $O(log(n))$ qubits to every other node, and those qubits may be entangled with one another. In the computation part of the rounds, each node can make unlimited use of quantum computing resources, i.e., preparing registers of qubits, applying unitary operators to them, and making measurements. 
Here, we give a formal definition of both the quantum and standard CONGEST-CLIQUE, with the quantum modifications in parentheses. 

\begin{definition}[(Quantum) CONGEST-CLIQUE] \label{def: QCCM}
The (quantum) CONGEST-CLIQUE Model, or (q)CCM, is a model of distributed computation in which an input graph $G = (V, E, W)$ is distributed over a network of $n$ (quantum) processors, where each processor is represented by a node in $V$.
Each node is assigned a unique ID number in $[n]$. Time passes in \textit{rounds}, each of which consists of the following phases:
\begin{enumerate}
    \item[i)] Each node may execute unlimited local (quantum) computation.
    \item[ii)] Each node may send a message consisting of $\mathcal{O}(\log n)$ (quantum) bits to each other node in the network. Each of those messages may be distinct. 
    \item[iii)] Each node receives and saves the other nodes' messages. 
\end{enumerate}
The input graph $G$ is distributed across the nodes as follows: 
Each node knows its own ID number, the ID numbers of its neighbors in $G$, the number of nodes $n$ in $G$, and the weights corresponding to the edges it is incident upon. The output solution to a problem must be given by having each node $v\in V$ return the restriction of the global output to $\mathcal{N}_{G}(v) := \{u: vu \in E\}$, its neighborhood in $G$. In the qCCM, no entanglement is shared across nodes initially.

We remark here that in both the cCCM and qCCM, any problem can be solved naively in $O(n)$ rounds. To see this, simply notice the following. Each node in the network is incident on at most $n$ edges. Sending the information of one edge as a message takes $O(1)$ messages, since $O(log(n))$ is the length of one message, and $log(n)$ bits suffice to encode the ID of a node in the graph. Information about an edge consists of two edges and its weight, so assuming the weight is an integer no more than $2^n$, $3log(n)$ bits suffice to encode the information about one edge. So it takes $O(n)$ rounds, or up to $n$ messages of size $3log(n)$, for a node to broadcast the information about all of its edges to the network. Every node can do this, simultaneously, in $O(n)$ rounds in the model, and so after this process every node in the graph has full information about the graph. Then each node can locally compute the solution to the problem that was posed about the graph. Alternatively, the nodes could all send their edge information to the node with the smallest ID in the network, which would then compute the solution and share it with the network, so that only $O(n^2)$ messages would need to be sent. This poses a particular challenge in formulating meaningful algorithms in the CONGEST-CLIQUE type models, because any algorithm that is not sublinear is no better than this simple approach. 

\begin{remark}
    We state the CONGEST-CLIQUE model as nodes of the graph for which the problem in question needs to be solved being the processors in the network. While the natural interpretation of this setup is that one wants to, for example, find a survivable network or MST etc. over the processors, one can also simply think about this as modeling a distributed computing approach in which one has $n$ processors and is solving a problem about some completely different graph that has $n$ nodes. If one matches up each processor node $v$ to a node $v'$ in the problem graph $G$ of interest and sends the information $\mathcal{N}_G(v')$ to $v$, one can use CONGEST-CLIQUE algorithms to solve the given problem over the processor network. 
\end{remark}

\end{definition}

\subsubsection{Brief Background on Quantum Computing}\label{subsubsec: qc}
While a single classical bit can be in the state 0 or 1, and a system of $n$ classical bits can be represented by the state space $\{0,1\}^n$, the state of a single quantum bit, or \textit{qubit}, can be represented by a unit-length vector in a two-dimensional complex Hilbert space $\cal Q$, with basis vectors commonly written as $\ket{0}$ and $\ket{1}$. A qubit in state $\ket{\psi}$ can be measured with respect to some orthonormal basis $\ket{v_1}, \ket{v_2}$, and the result of that measurement will be $\ket{v_1}$ with probability $\braket{\psi, v_1}^2$, and $\ket{v_2}$ with probability $\braket{\psi, v_2}$. Unit vectors can then represent the state space of a register of $n$ qubits in the $n$-fold tensor product of the single qubit state-spaces $\cal Q \otimes \cal Q \otimes \dots \otimes \cal Q$. Since these must always remain unit vectors (for the measurement probabilities to sum to $1$), operations on qubits are represented by $2^n\cross 2^n$ unitary matrices. Several physical implementations of qubits---such as superconducting qubits, trapped ion qubits, photonic qubits, etc.---currently exist, and are the subject of extensive ongoing research.
For a more thorough introduction to the fundamentals of quantum computing, we refer the reader to \cite{RieffelGentleIntroductionToQuantum}. Celebrated results in quantum computing include Grover's search algorithm, first formulated in \cite{grover1996}, which finds a marked element in an unsorted list of $N$ elements in $O(\sqrt{N})$ time, and Shor's algorithm \cite{Shor1995_og_paper}, which solves the prime factorization problem in polynomial time. On a high level, the intuitive reason quantum computers can give advantages in specific problems is that quantum interference effects can be leveraged to create canceling effects among non-solutions to a problem. The asymptotic quantum speedup for our algorithm stems from taking advantage of the Grover search, which can be used to speed up distributed triangle finding in graphs to allow for faster shortest-path computations. These can then be used for approximation algorithms to the survivable network design problem. 
The distributed version of Grover Search (see e.g. \cite{IzumiLeGall2020_TriangleFinding}), roughly speaking, does the following. If some node $u$ in the network is able to evaluate a function $g: X \rightarrow \{0,1\}$ in $r$ rounds of communication over the network, then distributed Grover search can be used to have $u$ find an element $x\in X$ such that $g(x) = 1$ in $\tilde{O}(r\sqrt{|X|})$ rounds. With no assumptions about $g$, it is straightforward to see that if $g$ can only be evaluated classically, $r|X|$ rounds are needed in the worst case to find $x\in X$ such that $g(x) = 1$. The best one can do is evaluate $g$ for each element in $X$ one-by-one until one finds that $g(x) = 1$. Since each evaluation takes $r$ rounds, $r|X|$ rounds are needed. Evaluating $g$ with inputs in superposition, by sending messages of possibly entangled qubits, can on the other hand be leveraged in the distributed setting to obtain a quadratic speedup in the number of rounds needed with respect to the size of $|X|$. Of course this is unhelpful if $r >> |X|$, but in many cases $|X|$ is much larger, and the speedup becomes meaningful.

\section{A General Distributed Algorithm for SNDP}
We begin by describing a general high-level algorithm for the SNDP first presented in \cite{GoemansBertsimas1993SurvivableNetworks}. Then, we move on to particular implementations in the distributed setting and study their complexities. For each edge $e\in E$, let $x_e$ denote the number of times $x_e$ appears in the solution to the SNDP. 
\begin{mdframed}\namedlabel{alg: tree heuristic}{\texttt{TreeHeuristic}}
\begin{enumerate}
        \item Compute shortest path distances and routing tables for all nodes.
        \item Prepare a list of the connectivity numbers of the nodes, $(c_1, ..., c_k)$, and let it be sorted $c_1 < c_2 < ... < c_k$.
        \item  For each $i = k, k-1, ..., 1$, compute an MST on the subgraph of $\bar{G}$ restricted to nodes $v$ with connectivity requirement $r_v \geq c_i$, and for each edge $uv$ in the resulting MST, increment $x_e$ by $c_i - c_{i-1}$ for every edge $e$ of a shortest path connecting $u$ and $v$.  \\
        \item Implement techniques for finding (local) improvements to the resulting solution. 
\end{enumerate}
\end{mdframed}

The intuition behind this algorithm is relatively straightforward. It is helpful to consider the TSP case, where $r_v = 2$ for all $v\in V$, and disregard step 4. Then, this procedure finds an MST on the graph and includes every edge of the MST twice to obtain the SNDP. Since the weight of the MST is at most the weight of the minimum weight Hamiltonian tour, one immediately obtains a 2-approximation to the TSP. For the Steiner tree case, where $r_v =1$ for Steiner nodes and $r_v = 0$ otherwise, the algorithm finds an MST on the complete distance graph and then constructs the SNDP solution by taking all the shortest paths corresponding to the edges in that MST, which recovers the approach of the approximation algorithm for the Steiner tree problem first introduced by \cite{Kou1981_fast_algo_for_ST_}, and was shown always to give an approximation guarantee of less than 2.

We provide a brief overview of how to implement this meta-algorithm in the distributed setting before formally describing it in the next section. The first step may be approached in various ways, e.g., using the asymptotically fastest means of computing APSP distances in \cite{IzumiLeGallMag2019_APSP_QuantumDist} (modified to handle routing tables in \cite{Kerger2023_mind_the_O-tilde}) or \cite{CensorHillel2016_AlgebraicMethodsCCM_fast_APSP} in the quantum or classical settings, respectively. Using faster approximation algorithms as in \cite{Biswas2020_approximate_shortest_paths} or \cite{Dory2021Constant_round_approx_APSP} is challenging because these typically do not provide routing table information, but approximate distance computations only. The second step can be achieved by having each node broadcast its connectivity number so that every node in the network knows all the other nodes' connectivity numbers and can prepare this sorted list. The MSTs in step 3 can be computed efficiently using, e.g., the algorithm of \cite{MST_in_O1_CCM_Nowicki2019}. A significant difficulty arises in step 3 of going from an edge in $\bar{G}$ to a shortest path connecting $u$ to $v$. This leads us to make use of finding \textit{shortest path forests} (SPF) as done in \cite{Saikia_Karmakar2019_SteinerTree_CONGESTCLIQUE} as a proxy for the complete distance graph for the iterations in step 3); however, the algorithm can still be viewed as an implementation of the above meta-algorithm. 

We provide here a description of the distributed implementation, for the quantum or classical CONGEST-CLIQUE: 
\begin{enumerate}
    \item Compute APSP via some distributed APSP algorithm (such as \cite{IzumiLeGallMag2019_APSP_QuantumDist}, with routing tables in \cite{Kerger2023_mind_the_O-tilde}, or \cite{CensorHillel2016_AlgebraicMethodsCCM_fast_APSP}) to obtain the complete distance graph $\bar{G}$ (or an approximation thereof) of the input graph $G$.
    \item Create a sorted list $\mathcal{L} := ( c_1, ..., c_k)$ of the distinct connectivity numbers (including connectivity number 0), and let $V_j = \{i\in V: r_i \geq c_j\}$ be the set of nodes with connectivity at least $c_j$. 
    \item For $j = k, ..., 1$ (skipping connectivity number 0 if present),\\
    Construct an SPF using $V_j$ as the source nodes for each tree. Modify edge weights as in \cite{Saikia_Karmakar2019_SteinerTree_CONGESTCLIQUE}, and find an MST on this graph. For all edges in the resulting MST, add $c_j - c_{j-1}$ copies of them to the survivable network after pruning excessive edges.   
    \item Search for local improvements, optionally. 
\end{enumerate}

This heuristic effectively does the following in a distributed way. Each distinct connectivity number is treated separately, and the algorithm begins by considering the nodes with the highest connectivity number. Suppose $c_{k}$ and $c_{k-1}$ are the largest and second-to-largest connectivity numbers. A low-cost tree $T_k$ containing the nodes $V_k$ is found, and for each edge in $T_k$, $c_{k} - c_{k-1}$ copies of the edge are added to $S$. Set the connectivity number of nodes $V_k$ to be $c_{k-1}$ so that $c_{k-1}$ is now the maximum connectivity number, and repeat until connectivity number zero is reached.

We make some remarks on the connection to the TSP as a specialization of SNDP. Treating instances of TSP as SNDP and applying the approximation algorithm of \cite{GoemansBertsimas1993SurvivableNetworks}, or the above distributed version we propose, one, in general, does not get Hamiltonian cycles as the ``approximate" solutions, even if they exist. Since finding a Hamiltonian cycle or deciding there is none is itself NP-complete, one should in some sense not expect this of an approximation or polynomial-time (centralized) algorithm (finding a Hamiltonian cycle is not much easier than solving the TSP). So, as our approach is to use a centralized framework based on APSP, which does not need exponential time in a centralized setting, and implement it efficiently in a distributed setting, one cannot hope to obtain Hamiltonian tours. So, we must be satisfied with allowing non-Hamiltonian tours as approximate TSP solutions, and indeed that is what the algorithms presented here obtain.

Further, we remark that for the TSP problem specifically, or other instances where all nodes have the same connectivity number, one can skip the APSP computation in step 1 unless one wishes to make certain local improvements in step 4. In particular, to implement Christofides-type improvements, one would need to compute APSP, as well as minimum weight perfect matchings. Furthermore, suppose one considers a variation of TSP in which not all nodes need to be visited, but some may be used as intermediates. In that case, the simple tree-doubling heuristic fails to give good approximation guarantees, whereas the general SNDP algorithm still applies. 

\subsection{Complexity}
In both the cCCM and qCCM, step 2) and each iteration of step 3) can be implemented in a constant number of rounds. This means that step 1) will usually be the computational bottleneck for the algorithm (as long as there are not too many iterations to be made in step 3), i.e., there are not too many distinct connectivity numbers). 
As a result, the presented SNDP algorithm also permits an asymptotically faster quantum implementation in the quantum CONGEST-CLIQUE model. Namely, the computation of APSP with routing tables in step 1) can be achieved using the algorithm from \cite{IzumiLeGallMag2019_APSP_QuantumDist}, extended to also compute routing tables in \cite{Kerger2023_mind_the_O-tilde}, which pushes the round complexity to $\tilde{O}(n^{1/4})$ instead of $\tilde{O}(n^{1/3})$ compared to using the asymptotically fastest known classical APSP algorithm in the standard CONGEST-CLIQUE model presented in \cite{CensorHillel2016_AlgebraicMethodsCCM_fast_APSP}. As long as there are no more than $O(n^{1/4})$ or $O(n^{1/3})$ distinct connectivity types in the given SNDP, which we consider a reasonable assumption to make. 

\subsection{Approximation Guarantees}
As mentioned previously, our algorithms can be viewed as distributedly realizing the algorithmic framework of \cite{GoemansBertsimas1993SurvivableNetworks} (specifically, the \textit{tree heuristic} algorithm of section 5). Hence, our algorithm inherits the same approximation guarantees for specific parameterizations of the problem, which we state in the following table, 
\begin{table}[htbp]
    \centering
    \caption{Instance of SNDP and Approximation Ratio}
    \begin{tabular}{l c} 
        \toprule
        Instance of SNDP & Approximation Ratio \\
        \midrule
        General & $(2-\frac{2}{|V_1|})\sum_{j=1}^k\frac{c_j - c_{j-1}}{c_j}$ \vspace{1em}\\
        $r_v \in \{0,1,2\} \forall v\in V$ & $2-\frac{2}{|V_1|}$  \\
        or $r_v \in \{0, k\} \forall v\in V$ & \vspace{1em} \\
        $c_1 = 1$ & $(2-\frac{2}{|V_1|})(\sum_{j=1}^k\frac{c_j - c_{j-1}}{c_j}-\frac{1}{2})$ \\
        \bottomrule
    \end{tabular}
    \label{table: sndp guarantees}
\end{table}
where $V_1$ and $c_j$ are as defined in \ref{alg: tree heuristic}, and $r_v$ are the connectivity requirements given in the SNDP. In particular, the special case of Steiner trees, TSP, and $k$-connected subnetwork all fall under the second row of the table, yielding approximation ratio $2-2/|V_1|$ in the worst-case.

\section{Distributed Implementation}
In this section, we describe in more detail the distributed implementation for our algorithms. We defer discussion of distributed APSP computations for step 1) to subsection \ref{sec: distributed APSP}.

Step 2) is easy to implement. Each node broadcasts its connectivity number, and so it takes one round for each node to create the desired list $\mathcal{L}$ of sorted connectivity numbers. Furthermore, every node can create the sets $V_j = \{i\in V: r_i \leq \rho_j\}$ since each node broadcasted its connectivity number.

Step 3) is much more involved and requires a more careful discussion. In particular, steps 1) and 2) only happen once, whereas step 3) executes the main loop for computing an approximate solution to the SNDP. 

We begin with the important definition of a \textit{shortest path forest}, or SPF, for the distributed algorithm for solving SNDP. Notably, in the distributed models, it will be easy to construct an SPF once shortest-path distances are computed. We will then first show how to solve SNDP approximately, given an SPF.
\begin{definition}{(Shortest Path Forest (SPF)):}\label{def: SPF}
For a weighted, undirected graph $G = (V, E, W)$ together with a given set of source nodes $Z = \{z_1, \dots, z_k\}$, a subgraph $F = (V, E_F, W)$ of $G$ is called a \textit{shortest path forest} if it consists of $|Z|$ disjoint trees $T_z = (V_z, E_z, W)$ satisfying
\begin{itemize}
    \item[i)]$z_i \in T_{z_j}$ if and only if $i =j$, for $i, j \in [k]$.
    \item[ii)] For each $v\in Z_i, d_G(v, z_i) = \min_{z \in Z}d_G(v,z)$, and a shortest path connecting $v$ to $z_i$ in $G$ is contained in $T_{z_i}$
    \item[iii)] The $V_{z_i}$ partition $V$, and $E_{z_1} \cup E_{z_2} \dots \cup E_{z_k}=E_F \subset E$
\end{itemize}
\end{definition}

In other words, an SPF is a forest obtained by gathering, for each node, a shortest path in $G$, connecting it to the closest source node in $Z$. For a node $v$ in a tree, let $par(v)$ denote the parent node of $v$ in that tree, $s(v)$ the source node in the tree that $v$ will be in, and $ID(v)\in [n]$ the ID of node $v \in V$. Let $\mathfrak{S}(v):= \{z: d_G(v, z) = \min_{z\in Z} d_G(v,z)\}$ be the set of source nodes closest to node $v$. 

The following gives a simple distributed algorithm for SPF in the CONGEST-CLIQUE setting: 
\begin{mdframed}\namedlabel{alg:spf}{\texttt{DistributedSPF}}
\begin{enumerate}
    \item [\namedlabel{itm:SPF1}{1}:] Each node $v$ sets $s(v) := argmin_{z\in \mathfrak{S}(v)} ID(z)$.
    \item [\namedlabel{itm:SPF2}{2}:] Each node $v$ sets $par(v):= R_v(s(v))$, $R_v$ being the routing table of $v$, and sends a message to $par(v)$ to indicate this. If $v$ receives such a message from another node $u$, it registers $u$ as its child in the SPF. 
\end{enumerate}
\end{mdframed}
Step \ref{itm:SPF1} in \ref{alg:spf} requires no communication, as does
each node $v$ setting $par(v)$ in step \ref{itm:SPF2}, if one assumes that APSP and routing tables have already been computed (which was done for our algorithm in step 1). Thus, step \ref{itm:SPF2} requires $1$ round of communication of up to $n$ (classical) messages.
\begin{claim}
After executing the \ref{alg:spf} procedure, the trees $T_{z_k} = (V_{z_k}, E_{z_k}, W)$ with $V_{z_k} := \{v\in V: s(v) = z_k\}$ and $E_{z_k} := \{v, par(v)\}: v\in V_{z_k}\}$ form an SPF.
\end{claim}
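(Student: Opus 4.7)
The plan is to verify the three SPF axioms of Definition~\ref{def: SPF} by deriving everything from two facts: the routing-table contract of Definition~\ref{def: routing table}, and the ID-minimal tie-break used in step~\ref{itm:SPF1}. Axiom (iii) is immediate because every node $v$ is assigned a unique $s(v)\in Z$ in step~\ref{itm:SPF1}, so the sets $V_{z_k}$ partition $V$. For axiom (i), I would argue that for any source $z_i$ we have $d_G(z_i,z_i)=0$, so (assuming positive integer edge weights) $\mathfrak{S}(z_i)=\{z_i\}$ and $s(z_i)=z_i$; combined with the uniqueness of $s(\cdot)$, this gives $z_i\in T_{z_j}$ iff $i=j$.

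The crux of the argument is a \emph{parent-consistency} lemma: for $v\in V_{z_i}$ with $v\neq z_i$, one must have $par(v)\in V_{z_i}$. I would prove this by setting $u=par(v)=R_v(z_i)$ and invoking the routing-table contract, which places $u$ on a shortest $v$-to-$z_i$ path and yields $d_G(v,z_i)=W(\{v,u\})+d_G(u,z_i)$. Combining the triangle inequality $d_G(v,z_j)\le W(\{v,u\})+d_G(u,z_j)$ with the minimality $d_G(v,z_i)\le d_G(v,z_j)$ gives $d_G(u,z_j)\ge d_G(u,z_i)$, so $z_i\in\mathfrak{S}(u)$. If moreover $z_j\in\mathfrak{S}(u)$, running the same chain with equality throughout shows $z_j\in\mathfrak{S}(v)$, and the ID-minimal choice $s(v)=z_i$ then forces $ID(z_i)\le ID(z_j)$; hence $s(u)=z_i$. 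I expect this tie-break bookkeeping to be the main obstacle: without the ID rule being applied consistently at both $v$ and $u$, a parent pointer could escape into a different tree and the construction would fail to be a forest.

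Given parent consistency, the remaining structure falls out easily. Each parent step strictly decreases $d_G(\cdot,z_i)$ by a positive edge weight, so iterating $par$ from any $v\in V_{z_i}$ cannot form a cycle and must terminate; by the preceding paragraph, the only terminus inside $V_{z_i}$ is $z_i$ itself, so $T_{z_i}$ is a tree rooted at $z_i$ and the edge set $E_{z_i}$ stays within $V_{z_i}$. Telescoping the identity $d_G(v,z_i)=W(\{v,par(v)\})+d_G(par(v),z_i)$ along the parent chain shows that the resulting $v$-to-$z_i$ path in $T_{z_i}$ has total weight exactly $d_G(v,z_i)=\min_{z\in Z}d_G(v,z)$, which verifies axiom (ii). Assembling the trees $T_{z_1},\dots,T_{z_k}$ then yields the claimed SPF.
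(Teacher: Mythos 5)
Your proof is correct and follows essentially the same route as the paper's, which asserts $par(v)\in V_{z_k}$ without justification and then iterates the parent map to the root. The only difference is that you actually establish that parent-consistency step --- including the ID tie-break bookkeeping needed to keep $s(par(v))=s(v)$ when a parent is equidistant from several sources --- which the paper's one-line proof leaves implicit.
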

\begin{proof}
i) holds since each source node is closest to itself. iii) is immediate. For ii), notice that for $v \in V_{z_k}$, $par(v) \in V_{z_k}$ and $\{v, par(v)\}\in E_{z_k}$ as well. Then $par(par(\dots par(v) \dots)) = z_k$ and the entire path to $z_k$ lies in $T_{z_k}$. 
\end{proof}
Hence, after this procedure, we have an SPF on the graph, with each node knowing its SPF parent-child relationships. 

Once one has computed the SPF, one can obtain an analog of an MST on the complete distance graph on the source nodes of the SPF by finding the minimum weight way to connect all the source nodes of the SPF. 
We modify the edge weights before constructing an MST on that new graph to achieve this, as done in \cite{Saikia_Karmakar2019_SteinerTree_CONGESTCLIQUE}. Partition the edges $E$ into three sets -- \textit{tree edges} $E_F$ as in \ref{def: SPF} that are part of the edge set of the SPF, \textit{intra-tree edges} $E_{IT}$ incident on two nodes in the same tree $T_i$ of the SPF, and \textit{inter-tree edges} $E_{XT}$ that are incident on two nodes in different trees of the SPF.
Having each node know which of these its edges belong to can be done in one round by having each node send its neighbors the ID of the source node it chose as the root of the tree in the SPF it is a part of.
Then, the edge weights are modified as follows, denoting the modified weights as $W'$:
\begin{itemize}
    \item[(i):] For $e =(u,v)\in E_T, W'(u,v) := 0$
    \item[(ii):] For $e = (u,v) \in E_{IT}, W'(u,v) := \infty$
    \item[(iii):] For $e = (u,v)\in E_{XT}, W'(u,v) := d(u, Z_u) + W(u,v) + d(v, Z_v)$,
\end{itemize}
where $d_G(u, s(u))$ is the shortest-path distance in $G$ from $u$ to its closest source node.

Next, we find a minimum spanning tree on the graph $G' = (V, E, W')$, for which we may implement the classical $\mathcal{O}(1)$ round algorithm proposed by \cite{MST_in_O1_CCM_Nowicki2019}.
On a high level, this constant-round complexity is achieved by sparsification techniques, reducing MST instances to sparse ones, and then solving those efficiently.
We skip the details here and refer the interested reader to \cite{MST_in_O1_CCM_Nowicki2019}. 
After this step, each node knows which of its edges are part of this weight-modified MST, as well as the parent-child relationships in the tree for those edges.

Finally, we prune this MST by removing non-source leaf nodes and the corresponding edges.
This is done by each node $v$ sending the ID of its parent in the MST 
to every other node in the graph.
As a result, each node can locally compute the entire MST and then decide whether it connects two of the source nodes.
If it does, it decides it stays part of the tree; otherwise, it broadcasts that it is to be pruned.
Each node that has not been pruned then registers the edges connecting it to non-pruned neighbors as part of the tree.

This pruning step takes $2$ rounds and up to $n^2+n$ classical messages.

After this procedure, one is left with exactly the desired tree that is needed in step 3) for the general algorithmic framework we follow. Namely, consider the complete distance graph $\bar{G}$, and its subgraph restricted to the vertices $V_j$ in iteration j of step 3). The tree obtained by the above method is exactly a minimum spanning tree on this subgraph, except that rather than having any edge $uv$ of the complete distance graph $\bar{G}$, in the tree (which is on $G$, not $\bar{G}$) we have the shortest path certifying the distance from $u$ to $v$ instead. We forego a formal proof of this fact but give some valuable intuition on how this happens. The SPF contains one tree for each of its root nodes, which are the nodes in $V_j$. Every other node in $G$ is a part of one of these trees, such that the path in the tree it is a member of to the root node of that tree is the shortest path from the node in question to the root node of the tree. Next, the edge weight modifications and finding an MST on the resulting graph effectively joins these trees in a minimum-weight manner. The edges in the tree are all part of shortest paths to the root nodes, and so this resulting tree joins the root nodes by the shortest paths connecting them. These shortest paths correspond to edges in $\bar{G}$, and the resulting tree is the realization of an MST over the subgraph of the root nodes $V_j$ in $\bar{G}$. We direct the reader to \cite{Saikia_Karmakar2019_SteinerTree_CONGESTCLIQUE} and \cite{Kou1981_fast_algo_for_ST_} for more detailed treatment.   

Once this tree is obtained, $\rho_j - \rho_{j-1}$ copies of every edge in the tree are added to the reported solution of the SNDP. This means that $\rho_j - \rho_{j-1}$ edge-disjoint paths between any two nodes in $V_j$ are added to the SNDP solution at iteration $j$. Noting that this tree corresponds to the desired tree for step 3 of \ref{alg: tree heuristic}, our method indeed serves as a distributed implementation thereof. In particular, it thus inherits the same guarantees as the centralized algorithm in \cite{GoemansBertsimas1993SurvivableNetworks} that we have summarized in Table \ref{table: sndp guarantees}.

\subsection{Distributed APSP and Routing Tables} \label{sec: distributed APSP}

Next, we discuss the APSP and routing table computations for step 1). First, we note that in the algorithmic framework we use, it is necessary to compute routing tables along with the shortest path distances. Although we work with the complete distance graph $\bar{G}$ following \cite{GoemansBertsimas1993SurvivableNetworks}, edges between nodes in $\bar{G}$ correspond to shortest paths between nodes in $G$. In the end, we need to report those paths in our solutions, not just edges in $\bar{G}$, so path information is needed. Exactly this necessary path information is stored in the routing tables (in \cite{GoemansBertsimas1993SurvivableNetworks}, this transformation going from edges in $\bar{G}$ to shortest paths in $G$ is discussed in Theorem 3).  We here mention the two asymptotically fastest existing algorithms in the classical and quantum CONGEST-CLIQUE models for computing APSP distances and routing tables. 

\begin{theorem}[APSP and Routing tables in cCCM \cite{CensorHillel2016_AlgebraicMethodsCCM_fast_APSP}]
    APSP and routing tables certifying the shortest path distances can be computed in $\tilde{O}(n^{1/3})$ rounds in the cCCM. 
\end{theorem}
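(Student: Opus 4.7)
The plan is to invoke the algebraic APSP algorithm of Censor-Hillel, Kaski, Korhonen, Lenzen, Paz, and Suomela, and to augment its min-plus matrix products so that, alongside each shortest-path distance, the network also computes a compact certificate from which each node can locally read off its routing table $R_v$. Recall that the APSP routine in the cCCM proceeds by iterated distance-product (min-plus) matrix multiplications of the form $D \leftarrow D \otimes D$, where $D_{uv}$ currently stores the best known $u$-to-$v$ distance using at most a bounded number of hops. Each such distance-product can be implemented in $\tilde{O}(n^{1/3})$ rounds using the distributed fast matrix multiplication that underpins the Censor-Hillel framework, and $O(\log n)$ repeated squarings suffice to converge to the true APSP distances, yielding the claimed round complexity.

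To extend this to routing tables, I would carry a witness field $W_{uv}\in V\cup\{\bot\}$ in parallel with $D_{uv}$, where $W_{uv}$ records the first node on the currently certified $u$-to-$v$ shortest path. In a distance-product $C_{ij} = \min_k (A_{ik}+B_{kj})$, along with transmitting the minimizing value one also transmits the argmin $k^\star$; the witness of the new entry is then set to $W^A_{i k^\star}$, i.e.\ the first hop of the left sub-path. This adds only $O(\log n)$ bits per matrix entry, which fits inside the bandwidth budget of a single cCCM message, so each augmented distance-product still runs in $\tilde{O}(n^{1/3})$ rounds. Tie-breaking by smallest ID guarantees global consistency, so that at the end of all iterations the witness field $W_{vu}$ is exactly the first hop on a well-defined canonical shortest $v$-to-$u$ path, i.e.\ the value $R_v(u)$ of Definition~\ref{def: routing table}.

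The last step is to make sure each node $v$ actually owns its row $(D_{v,\cdot}, W_{v,\cdot})$ after the algebraic computation. If the Censor-Hillel algorithm leaves the matrix entries distributed across the clique in a layout different from the by-row layout we need, we apply one round of Lenzen's balanced routing to redistribute the $W$ entries, which costs $O(1)$ additional CCM rounds. The main obstacle I expect is the witness-consistency issue under repeated squaring: when a distance is reconfirmed (rather than strictly improved) at a later iteration, one must avoid overwriting a valid witness with one that corresponds to a path of correct length but is not itself a shortest sub-path. I would handle this by updating $W_{uv}$ only on strict improvements and by verifying, via the same canonical tie-breaking rule, that composed witnesses respect the subpath-optimality property. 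Once this bookkeeping is established, the round count simply inherits from the unmodified APSP algorithm, yielding the $\tilde{O}(n^{1/3})$ bound for distances and routing tables together.
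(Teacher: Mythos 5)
Your proposal is correct in substance, and you should note that the paper does not prove this theorem at all: it imports it as a black-box citation to Censor-Hillel et al.\ and only sketches the underlying machinery informally in its later discussion of distributed APSP. Where your route genuinely differs from that sketch is in how the routing tables are extracted. You compute each min-plus product ``directly'' and piggyback the argmin $k^\star$ on the same message, setting the witness of $C_{ij}$ to the first hop of the left subpath; this works cleanly when the node responsible for entry $(i,j)$ actually sees the candidate sums $A_{ik}+B_{kj}$, as in the semiring matrix multiplication of the cited work. The paper instead frames the distance product as repeated negative-triangle detection with a binary search over the entries of an auxiliary matrix $D$ (a framing chosen to unify with the quantum $\tilde{O}(n^{1/4})$ case), and under that implementation the argmin is \emph{not} directly exposed, which is why the paper appeals to the Zwick-style trick of recovering witnesses via $\mathrm{polylog}(n)$ additional distance products rather than inline bookkeeping. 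Both routes land on the same $\tilde{O}(n^{1/3})$ bound, and your strict-improvement rule plus ID tie-breaking adequately handles witness consistency; the one small edge case you should add is $k^\star \in \{i,j\}$, where the new witness must be taken from the $B$-factor (or be $j$ itself) rather than from $W^A_{ik^\star}$.
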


\begin{theorem}[APSP in qCCM \cite{IzumiLeGallMag2019_APSP_QuantumDist}]
    APSP distances can be computed in $\tilde{O}(n^{1/4})$ rounds in the qCCM. 
\end{theorem}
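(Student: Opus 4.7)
The plan is to reduce the computation of all-pairs shortest paths to a sequence of min-plus matrix products in the qCCM, and then to accelerate each product step by a distributed Grover search; balancing the parameters of a landmark decomposition against the Grover-accelerated product cost should yield $\tilde{O}(n^{1/4})$ rounds. This parallels the classical $\tilde{O}(n^{1/3})$ route via algebraic matrix multiplication in the cCCM, but replaces the classical witness-search bottleneck with a quantum one.

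First I would invoke a standard landmark (or ``hit set'') decomposition: sample a set $L \subseteq V$ of size $\tilde{O}(n^{1-\alpha})$ for a parameter $\alpha$ to be tuned, so that with high probability every shortest path of hop-length at least $n^{\alpha}$ contains a landmark. Short-hop shortest paths would then be handled by a truncated Bellman--Ford style procedure in $O(n^{\alpha})$ rounds, while long-hop shortest paths are recovered by computing single-source shortest paths from each landmark, followed by a min-plus product of an $|L|\times n$ matrix with an $n\times |L|$ matrix and a final combination step. Each of these ingredients has a direct cCCM counterpart, so the novel part is the quantum speedup in the matrix-product phase.

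Second, I would speed up the min-plus product using the distributed Grover subroutine recalled in \S\ref{subsubsec: qc}. For each pair $(u,v)$, the witness of the min-plus entry can be viewed as a marked element in a search space of size $n$, so that provided one predicate query costs a constant number of rounds, Grover locates it in $\tilde{O}(\sqrt{n})$ rounds. The predicate ``is candidate $k$ a witness for the entry $(u,v)$?'' is evaluable in $O(1)$ rounds in the clique, because each processor can hold its own row of landmark distances and relay the relevant candidate value within a single message. Amortizing the $\sqrt{n}$ Grover cost across pairs via the clique's parallel bandwidth, and balancing against the $n^{\alpha}$ cost of truncated Bellman--Ford, leads to the choice $\alpha = 1/4$ and the claimed overall round complexity of $\tilde{O}(n^{1/4})$.

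The hardest part, I expect, is executing the Grover oracle coherently under the $O(\log n)$ qubit-per-message congestion bound. Because Grover must be run on a superposition over witness indices, each oracle call must transport $O(\log n)$ qubits per searching pair while keeping the processors' quantum states entangled across the amplification steps; here the qCCM's ability to send registers of $O(\log n)$ entangled qubits between any pair of nodes per round is used in an essential way. I would verify correctness using the standard distributed Grover framework, then carefully check that the amortization over the pairs, combined with matrix-multiplication-style parallel routing on the clique, interacts correctly with the landmark sampling probability to deliver exactly $\tilde{O}(n^{1/4})$ rather than a weaker bound; this interaction between the quantum subroutine and the classical parameter balancing is the step I would scrutinize most carefully.
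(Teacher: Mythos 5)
Your route diverges from the one the paper (following \cite{IzumiLeGallMag2019_APSP_QuantumDist}) actually sketches, and the divergence hides a genuine gap at the crux. The cited algorithm uses no landmark/hitting-set decomposition and no hop-truncated Bellman--Ford: it reduces APSP to $O(\log n)$ distance products by repeated squaring of the adjacency matrix, reduces each distance product to negative-triangle finding on a tripartite $3n$-node graph combined with a binary search over the entries of the third matrix, and then finds triangles by partitioning $V\times V\times V$ into $n$ cells of shape $n^{3/4}\times n^{3/4}\times n^{1/2}$, assigning one cell to each processor. The $\tilde{O}(n^{1/4})$ bound comes from balancing the $\tilde{O}(n^{1/4})$ cost of routing each cell's first two coordinates to its processor against a distributed Grover search over the third coordinate of size $n^{1/2}$, which costs $\tilde{O}(\sqrt{n^{1/2}})=\tilde{O}(n^{1/4})$.

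Your proposal instead runs, for each pair $(u,v)$, a Grover (really minimum-finding) search over the $n$ candidate witnesses $k$ of $\min_k\{A_{uk}+B_{kv}\}$ at $\tilde{O}(\sqrt{n})$ rounds per pair, and defers the entire difficulty to ``amortizing across pairs via the clique's parallel bandwidth.'' That amortization is precisely what fails: each node owns $n$ pairs, each search needs $\tilde{\Omega}(\sqrt{n})$ adaptive query rounds, and the queries of distinct searches compete for the same $O(n\log n)$ qubits of per-round bandwidth at that node, since evaluating the predicate for candidate $k$ requires communicating with the holder of row $k$ of $B$. Run naively this is $\tilde{O}(n^{3/2})$ rounds per node, and even a single pair's $\tilde{O}(\sqrt{n})$ search already exceeds the target bound; no argument is given that concurrency recovers $n^{1/4}$. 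The entire content of the result is the partition that shrinks each processor's quantum search space to $n^{1/2}$ and the number of searches per processor to $\tilde{O}(1)$, paid for by the routing step; without that (or an equivalent load-balancing scheme) the bound does not follow. The landmark decomposition and the $\alpha=1/4$ balancing are likewise unsupported --- neither the truncated short-hop phase nor the rectangular min-plus product is shown to cost $\tilde{O}(n^{1/4})$ --- and they are unnecessary once repeated squaring of the distance product is available.
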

This result is extended to computing routing tables in $\tilde{O}(n^{1/4})$ as well in \cite{Kerger2023_mind_the_O-tilde}. 

We remark that approximation algorithms for APSP exist as well in these models. However, they tend to not compute routing tables, but only APSP \textit{distances}; see for example \cite{Dory2021Constant_round_approx_APSP} or \cite{nanongkai2014distributed}. The latter does mention routing tables, but they are obtained in $O(n)$ time after computing the APSP distances, which is not useful in the CONGEST-CLIQUE model since any problem can be solved in $O(n)$ time in the CCM (this $O(n)$ routing table calculation is meaningful in the CONGEST model though, which the work focused on). To make use of faster approximation algorithms for frameworks like ours, these algorithms would need to also produce routing tables. 

We provide here some explanation on how these algorithms work, and how the quantum speedup for the APSP step is achieved. 

Both the classical and quantum algorithms for APSP use so called \textit{distance products} for the shortest path computations, defined as follows: 
\begin{definition}\label{def: distance products}
The \textit{distance product} between two $n\cross n$ matrices $A$ and $B$
is defined as the $n\cross n$ matrix $A \star B$ with entries:  
\begin{align}
    (A \star B)_{ij} = \min_k\{A_{ik} + B_{kj}\}.
\end{align}
\end{definition}
The distance product is also sometimes called the min-plus or tropical product. In particular, consider a graph $G$ with adjacency matrix $A$, whose $ij$ entry contains either the weight of an edge between node labeled $i$ and the node labeled $j$, or $\infty$ if there is no edge between nodes $i$ and $j$. The entries of the distance product $A' := A\star A$ tell us the shortest path from node $i$ to node $j$ that uses $2$ edges (or, if no such path exists, the entry is $\infty$). Then, taking $A'' := A' \star A'$ has in the $ij$ entry the shortest-path distance from node $i$ to $j$ that uses up to $4$ edges, taking $A''' := A'' \star A''$ gives the shortest path distance using up to $8$ edges, and so on. Since any shortest path uses at most $n$ edges, computing $\ceil{log_2(n)}$ distance products suffices to compute shortest path distances, and a relatively simple modification (see e.g. \cite{witness_ref_Zwick2000}, \cite{Kerger2023_mind_the_O-tilde}) allows to compute routing tables via computing $polylog(n)$ distance products as well. 
These distance products can be computed via searching for negative triangles in a graph. 

Given a weighted graph, a negative triangle is a triple of edges $\Delta^- = (uv, vw, wu) \subset E^3$ such that $\sum_{e\in \Delta^-} W_e < 0$, i.e. a $3$-clique such that the sum of the edges' weights is negative. 
Suppose that all $n$ nodes in the network can determine which of their incident edges are part of a negative triangles in $T(n)$ rounds (simultaneously). Then for two $n\cross n$ matrices $A$ and $B$ with entries in $[M]$, the distance product $A \star B$ can be computed distributedly in $T(3n)\cdot \ceil{\log_2(2M)}$ rounds.
 To make this clear, let $A$ and $B$ be arbitrary $n \cross n$ integer-valued matrices, and $D$ be an $n \cross n$ matrix initialized to $\0$.S Suppose the information is distributed across the network as the node labeled $j$ knowing the $j^{th}$ row of $A$ and $B$. We describe now how the distance product is computed through finding such negative triangles. Let each node $u$ in the network simulate three copies of itself,$u_1, u_2, u_3$, and write $V_1, V_2, V_3$ as these node-copy sets that are being simulated by all the nodes in $V$. Take $G' = (V_1 \cup V_2 \cup V_3, E', W')$, letting $u_iv_j \in E'$ for $u_i \in V_i, v_j \in V_j, i\neq j$, taking $W_{u_1v_2}' = A_{uv}$ for $u_1 \in V_1, v_{2} \in V_{2}$, 
  $W_{u_2v_3}' = B_{uv}$ for $u_2 \in V_2, v_{3} \in V_{3}$,
  and $W_{u_3v_1}' = D_{uv}$ for $u_3 \in V_3, v_1 \in V_{1}$.
Notice that the following condition is equivalent to an edge $zv$ belonging to a negative triangle in $G'$:
$$
\min_{u \in V} \{A_{vu} + B_{uz}\} < -D_{zv}.
$$
Assuming we can find negative triangles for a $k$-node graph in $T(n)$ rounds, 
with a non-positive matrix $D = \mathbf{0}$ initialized we can apply simultaneous binary searches on $D_{zv}$, with values between $\{-2M, 0\}$, updating it for each node $v$ after each run of triangle finding to find 
$\min_{u \in V} \{A_{vu} + B_{uz}\}$ for every other node $z$ in\\ $T(3n)\cdot \ceil{\log(\max_{v, z \in V}\{\min_{u \in V}\{ A_{vu}+B_{uz}\}\})}$ rounds, since $G'$ is a tripartite graph with $3n$ nodes.
With this reduction to triangle finding established, we provide a high-level idea of how the triangle finding can be done faster in the quantum model than in the classical model, and refer the reader to \cite{IzumiLeGallMag2019_APSP_QuantumDist} and \cite{Kerger2023_mind_the_O-tilde} for details in the quantum model, and \cite{CensorHillel2016_AlgebraicMethodsCCM_fast_APSP} and \cite{Dolev_lenzen_2012TriTA} for a rigorous treatment in the classical model. Since triangles consist of three nodes, a triangle in $G$ is an element of the space $\bV^3 := V \times V \times V$. Consider partitioning that space, and to each node in the network assign the task of searching through one piece of the partition for negative triangles. Take sets of the form $\bV_1 \times \bV_2 \times \bV_3$ to partition $\bV^3$, where elements of the $\bV_i$ are sets of nodes. In the quantum case, one can take $|\bV_1| = |\bV_2| = n^{1/4}$ and $|\bV_3| = n^{1/2}$, with $\w\in \bV_1$ or $\bV_2$ to contain $n^{3/4}$ nodes, and $\w \in \bV_3$ to contain $n^{1/2}$ nodes. That is to say, the space $\bV^3$ is partitioned by twice taking $n^{1/4}$ subsets of $n^{3/4}$ nodes each, and once taking $n^{1/2}$ subsets of $n^{1/2}$ nodes each. In this way, $\bV_1 \times \bV_2 \times \bV_3$ has $n$ elements (that are each sets of nodes), and so one can assign one node in the network the task of finding triangles in one of these subsets to cover all of them. Through efficient routing schemes, it is possible to send the information about the relevant elements of $\bv_1$ and $\bV_2$ to the relevant node in $\tilde{O}(n^{1/4})$ rounds, and then the node will search one of the sets of nodes in $\bV_3$ for triangles formed with the nodes in $\bV_1, \bV_2$. Since that will be a set of $n^{1/2}$ nodes, one can use a distributed Grover search to find triangles searching through this set in $\tilde{O}(\sqrt{n^{1/2}}) = \tilde{O}(n^{1/4})$ time. 
In the classical case, the partitioning needs to be done differently. Namely, searching through the above $\bV_3$ for triangles would take $n^{1/2}$  time classically. Instead, one can create a different partition so that the sets have $n^{1/3}$ elements each in them, so that both the routing and searching take $n^{1/3}$ time each. The details of such a $O(n^{1/3})$ procedure for triangle finding in the classical CONGEST-CLIQUE can be found in \cite{Dolev_lenzen_2012TriTA}. So in effect, the source of the quantum speedup for the triangle finding problem is that the approach of partitioning $V\times V \times V$ and then routing information about one partition set to one node each, ends up making the node in question implement an unstructured search. Since distributed Grover search gives a quadratic speedup for the unstructured search, one is able to speed up the entire process, but the speedup is a bit less than quadratic due to the balancing of the partitioning scheme involved.

\section{On Potential Improvements}
Finally, we discuss potential improvements to the algorithm, first by noting possible local improvements that can be made as proposed in the final step 4) of the algorithm. After each tree to be added to the solution the algorithm reports is computed in the iterations of step 3, that tree can be broadcasted to the entire network in a single round. Namely, each node in the tree can broadcast its parent in the tree and the associated weight. Hence, we may assume that at the end of step 3), each node has information about the full approximate solution at the time. Notice that through the method of taking a tree and taking multiple copies of edges in it, in many cases one may be able to improve on the approximate solution by adding an edge to create a cycle and then deleting a copy of each of the other edges along the cycle. For example, in the simple case of a triangle graph with connectivity requirement 2 for each node and weight 1 on each edge, the tree heuristic gives the solution of using two edges twice. Adding the third edge creates a cycle, so that if one deletes the second copies of the other two edges, there are still two paths from each node to every other node (one by going along the cycle in each direction). Hence the connectivity requirements remain satisfied, but the updated solution has lower weight. Since each node has full information about $\hat{S}$, it can locally determine whether any edges it is incident upon can create such a cycle that can lead to an improvement, and broadcast that edge to the network. Then every node receives a list of edges for proposed improvements and can compute the updated best SNDP solution given the information about those edges.

In fact, a similar idea of starting with a tree and then finding ``shortcuts" is used in Christofides' approximation algorithm for the TSP \cite{Christofides1976WorstCaseAO}, which is generalized to the \textit{improved tree heuristic} in \cite{GoemansBertsimas1993SurvivableNetworks}. This involves computing minimum weight perfect matchings after each iteration of step 3 of \ref{alg: tree heuristic} to find improvements to the simpler heuristic of adding multiple copies of each edge of the tree to the solution subgraph. With this improvement, the algorithm gives provably better solutions in some cases of the SNDP, and so the distributed algorithms here could be improved via fast distributed minimum weight perfect matching (MWPM) algorithms. We note that the MWPM problem is also of high importance in the field of quantum error correction, and so such algorithms would be of interest in their own right.

\section*{Acknowledgements} 
We are grateful for support from the NASA Ames Research Center, from the NASA SCaN
program, and from DARPA under IAA 8839, Annex 130. PK and DB acknowledge support from the
NASA Academic Mission Services (contract NNA16BD14C). 



{\small
 \bibliographystyle{apacite} 
 \bibliography{quantumcite}

\begin{thebibliography}{}

\bibitem [\protect \citeauthoryear {%
Ambainis%
\ \BBA {} Spalek%
}{%
Ambainis%
\ \BBA {} Spalek%
}{%
{\protect \APACyear {2005}}%
}]{%
Ambainis2005QuantumMatchingAndNetworkFlows}
\APACinsertmetastar {%
Ambainis2005QuantumMatchingAndNetworkFlows}%
\begin{APACrefauthors}%
Ambainis, A.%
\BCBT {}\ \BBA {} Spalek, R.%
\end{APACrefauthors}%
\unskip\
\newblock
\APACrefYearMonthDay{2005}{}{}.
\newblock
{\BBOQ}\APACrefatitle {Quantum Algorithms for Matching and Network Flows} {Quantum algorithms for matching and network flows}.{\BBCQ}
\newblock
\BIn{} \APACrefbtitle {Symposium on Theoretical Aspects of Computer Science.} {Symposium on theoretical aspects of computer science.}
\newblock
\begin{APACrefURL} \url{https://api.semanticscholar.org/CorpusID:8281979} \end{APACrefURL}
\PrintBackRefs{\CurrentBib}

\bibitem [\protect \citeauthoryear {%
Biswas%
, Dory%
, Ghaffari%
, Mitrovi'c%
\BCBL {}\ \BBA {} Nazari%
}{%
Biswas%
\ \protect \BOthers {.}}{%
{\protect \APACyear {2020}}%
}]{%
Biswas2020_approximate_shortest_paths}
\APACinsertmetastar {%
Biswas2020_approximate_shortest_paths}%
\begin{APACrefauthors}%
Biswas, A\BPBI S.%
, Dory, M.%
, Ghaffari, M.%
, Mitrovi'c, S.%
\BCBL {}\ \BBA {} Nazari, Y.%
\end{APACrefauthors}%
\unskip\
\newblock
\APACrefYearMonthDay{2020}{}{}.
\newblock
{\BBOQ}\APACrefatitle {Massively Parallel Algorithms for Distance Approximation and Spanners} {Massively parallel algorithms for distance approximation and spanners}.{\BBCQ}
\newblock
\APACjournalVolNumPages{Proceedings of the 33rd ACM Symposium on Parallelism in Algorithms and Architectures}{}{}{}.
\PrintBackRefs{\CurrentBib}

\bibitem [\protect \citeauthoryear {%
Censor-Hillel%
, Fischer%
, Le~Gall%
, Leitersdorf%
\BCBL {}\ \BBA {} Oshman%
}{%
Censor-Hillel%
\ \protect \BOthers {.}}{%
{\protect \APACyear {2022}}%
}]{%
CensorHillel2022_QuantDistCliqueDetect}
\APACinsertmetastar {%
CensorHillel2022_QuantDistCliqueDetect}%
\begin{APACrefauthors}%
Censor-Hillel, K.%
, Fischer, O.%
, Le~Gall, F.%
, Leitersdorf, D.%
\BCBL {}\ \BBA {} Oshman, R.%
\end{APACrefauthors}%
\unskip\
\newblock
\APACrefYearMonthDay{2022}{}{}.
\newblock
\APACrefbtitle {{Quantum Distributed Algorithms for Detection of Cliques}.} {{Quantum Distributed Algorithms for Detection of Cliques}.}
\newblock
\APACaddressPublisher{}{arXiv}.
\newblock
\begin{APACrefURL} \url{https://arxiv.org/abs/2201.03000} \end{APACrefURL}
\newblock
\begin{APACrefDOI} \doi{10.48550/ARXIV.2201.03000} \end{APACrefDOI}
\PrintBackRefs{\CurrentBib}

\bibitem [\protect \citeauthoryear {%
Censor-Hillel%
\ \protect \BOthers {.}}{%
Censor-Hillel%
\ \protect \BOthers {.}}{%
{\protect \APACyear {2016}}%
}]{%
CensorHillel2016_AlgebraicMethodsCCM_fast_APSP}
\APACinsertmetastar {%
CensorHillel2016_AlgebraicMethodsCCM_fast_APSP}%
\begin{APACrefauthors}%
Censor-Hillel, K.%
, Kaski, P.%
, Korhonen, J\BPBI H.%
, Lenzen, C.%
, Paz, A.%
\BCBL {}\ \BBA {} Suomela, J.%
\end{APACrefauthors}%
\unskip\
\newblock
\APACrefYearMonthDay{2016}{mar}{}.
\newblock
{\BBOQ}\APACrefatitle {{Algebraic methods in the congested clique}} {{Algebraic methods in the congested clique}}.{\BBCQ}
\newblock
\APACjournalVolNumPages{Distributed Computing}{32}{6}{461--478}.
\newblock
\begin{APACrefDOI} \doi{10.1007/s00446-016-0270-2} \end{APACrefDOI}
\PrintBackRefs{\CurrentBib}

\bibitem [\protect \citeauthoryear {%
Christofides%
}{%
Christofides%
}{%
{\protect \APACyear {1976}}%
}]{%
Christofides1976WorstCaseAO}
\APACinsertmetastar {%
Christofides1976WorstCaseAO}%
\begin{APACrefauthors}%
Christofides, N.%
\end{APACrefauthors}%
\unskip\
\newblock
\APACrefYearMonthDay{1976}{}{}.
\newblock
{\BBOQ}\APACrefatitle {Worst-Case Analysis of a New Heuristic for the Travelling Salesman Problem} {Worst-case analysis of a new heuristic for the travelling salesman problem}.{\BBCQ}
\newblock
\APACjournalVolNumPages{Operations Research Forum}{3}{}{}.
\newblock
\begin{APACrefURL} \url{https://api.semanticscholar.org/CorpusID:123194397} \end{APACrefURL}
\PrintBackRefs{\CurrentBib}

\bibitem [\protect \citeauthoryear {%
Dolev%
, Lenzen%
\BCBL {}\ \BBA {} Peled%
}{%
Dolev%
\ \protect \BOthers {.}}{%
{\protect \APACyear {2012}}%
}]{%
Dolev_lenzen_2012TriTA}
\APACinsertmetastar {%
Dolev_lenzen_2012TriTA}%
\begin{APACrefauthors}%
Dolev, D.%
, Lenzen, C.%
\BCBL {}\ \BBA {} Peled, S.%
\end{APACrefauthors}%
\unskip\
\newblock
\APACrefYearMonthDay{2012}{}{}.
\newblock
{\BBOQ}\APACrefatitle {{"Tri, Tri Again": Finding Triangles and Small Subgraphs in a Distributed Setting - (Extended Abstract)}} {{"Tri, Tri Again": Finding Triangles and Small Subgraphs in a Distributed Setting - (Extended Abstract)}}.{\BBCQ}
\newblock
\APACjournalVolNumPages{ArXiv}{abs/1201.6652}{}{}.
\PrintBackRefs{\CurrentBib}

\bibitem [\protect \citeauthoryear {%
Dory%
, Fischer%
, Khoury%
\BCBL {}\ \BBA {} Leitersdorf%
}{%
Dory%
\ \protect \BOthers {.}}{%
{\protect \APACyear {2021}}%
}]{%
Dory2021Constant_round_approx_APSP}
\APACinsertmetastar {%
Dory2021Constant_round_approx_APSP}%
\begin{APACrefauthors}%
Dory, M.%
, Fischer, O.%
, Khoury, S.%
\BCBL {}\ \BBA {} Leitersdorf, D.%
\end{APACrefauthors}%
\unskip\
\newblock
\APACrefYearMonthDay{2021}{}{}.
\newblock
{\BBOQ}\APACrefatitle {Constant-Round Spanners and Shortest Paths in Congested Clique and MPC} {Constant-round spanners and shortest paths in congested clique and mpc}.{\BBCQ}
\newblock
\APACjournalVolNumPages{Proceedings of the 2021 ACM Symposium on Principles of Distributed Computing}{}{}{}.
\PrintBackRefs{\CurrentBib}

\bibitem [\protect \citeauthoryear {%
Elkin%
, Klauck%
, Nanongkai%
\BCBL {}\ \BBA {} Pandurangan%
}{%
Elkin%
\ \protect \BOthers {.}}{%
{\protect \APACyear {2012}}%
}]{%
Elkin2014_NoQuantumSpeedups}
\APACinsertmetastar {%
Elkin2014_NoQuantumSpeedups}%
\begin{APACrefauthors}%
Elkin, M.%
, Klauck, H.%
, Nanongkai, D.%
\BCBL {}\ \BBA {} Pandurangan, G.%
\end{APACrefauthors}%
\unskip\
\newblock
\APACrefYearMonthDay{2012}{}{}.
\newblock
\APACrefbtitle {{Can Quantum Communication Speed Up Distributed Computation?}} {{Can Quantum Communication Speed Up Distributed Computation?}}
\newblock
\APACaddressPublisher{}{arXiv}.
\newblock
\begin{APACrefURL} \url{https://arxiv.org/abs/1207.5211} \end{APACrefURL}
\newblock
\begin{APACrefDOI} \doi{10.48550/ARXIV.1207.5211} \end{APACrefDOI}
\PrintBackRefs{\CurrentBib}

\bibitem [\protect \citeauthoryear {%
Fischer%
\ \BBA {} Oshman%
}{%
Fischer%
\ \BBA {} Oshman%
}{%
{\protect \APACyear {2021}}%
}]{%
fischer2021_DMST}
\APACinsertmetastar {%
fischer2021_DMST}%
\begin{APACrefauthors}%
Fischer, O.%
\BCBT {}\ \BBA {} Oshman, R.%
\end{APACrefauthors}%
\unskip\
\newblock
\APACrefYearMonthDay{2021}{}{}.
\newblock
{\BBOQ}\APACrefatitle {{A distributed algorithm for directed minimum-weight spanning tree}} {{A distributed algorithm for directed minimum-weight spanning tree}}.{\BBCQ}
\newblock
\APACjournalVolNumPages{Distributed Computing}{}{}{1--31}.
\PrintBackRefs{\CurrentBib}

\bibitem [\protect \citeauthoryear {%
Ghaffari%
}{%
Ghaffari%
}{%
{\protect \APACyear {2020}}%
}]{%
DistGraphAlgos_LectureNotes_Ghaffari}
\APACinsertmetastar {%
DistGraphAlgos_LectureNotes_Ghaffari}%
\begin{APACrefauthors}%
Ghaffari, M.%
\end{APACrefauthors}%
\unskip\
\newblock
\APACrefYearMonthDay{2020}{}{}.
\newblock
\APACrefbtitle {{Distributed Graph Algorithms (Lecture Notes)}.} {{Distributed Graph Algorithms (Lecture Notes)}.}
\newblock
\begin{APACrefURL} \url{https://disco.ethz.ch/courses/podc/lecturenotes/LOCAL.pdf} \end{APACrefURL}
\PrintBackRefs{\CurrentBib}

\bibitem [\protect \citeauthoryear {%
Goemans%
\ \BBA {} Bertsimas%
}{%
Goemans%
\ \BBA {} Bertsimas%
}{%
{\protect \APACyear {1993}}%
}]{%
GoemansBertsimas1993SurvivableNetworks}
\APACinsertmetastar {%
GoemansBertsimas1993SurvivableNetworks}%
\begin{APACrefauthors}%
Goemans, M\BPBI X.%
\BCBT {}\ \BBA {} Bertsimas, D.%
\end{APACrefauthors}%
\unskip\
\newblock
\APACrefYearMonthDay{1993}{}{}.
\newblock
{\BBOQ}\APACrefatitle {{Survivable networks, linear programming relaxations and the parsimonious property}} {{Survivable networks, linear programming relaxations and the parsimonious property}}.{\BBCQ}
\newblock
\APACjournalVolNumPages{Mathematical Programming}{60}{}{145--166}.
\PrintBackRefs{\CurrentBib}

\bibitem [\protect \citeauthoryear {%
Goemans%
\ \BBA {} Williamson%
}{%
Goemans%
\ \BBA {} Williamson%
}{%
{\protect \APACyear {1992}}%
}]{%
Goemans_williamson_1992_constrained_forest_problems}
\APACinsertmetastar {%
Goemans_williamson_1992_constrained_forest_problems}%
\begin{APACrefauthors}%
Goemans, M\BPBI X.%
\BCBT {}\ \BBA {} Williamson, D\BPBI P.%
\end{APACrefauthors}%
\unskip\
\newblock
\APACrefYearMonthDay{1992}{}{}.
\newblock
{\BBOQ}\APACrefatitle {A general approximation technique for constrained forest problems} {A general approximation technique for constrained forest problems}.{\BBCQ}
\newblock
\APACjournalVolNumPages{SIAM J. Comput.}{24}{}{296-317}.
\newblock
\begin{APACrefURL} \url{https://api.semanticscholar.org/CorpusID:1796896} \end{APACrefURL}
\PrintBackRefs{\CurrentBib}

\bibitem [\protect \citeauthoryear {%
Grover%
}{%
Grover%
}{%
{\protect \APACyear {1996}}%
}]{%
grover1996}
\APACinsertmetastar {%
grover1996}%
\begin{APACrefauthors}%
Grover, L\BPBI K.%
\end{APACrefauthors}%
\unskip\
\newblock
\APACrefYearMonthDay{1996}{}{}.
\newblock
\APACrefbtitle {{A fast quantum mechanical algorithm for database search}.} {{A fast quantum mechanical algorithm for database search}.}
\newblock
\APACaddressPublisher{}{arXiv}.
\newblock
\begin{APACrefURL} \url{https://arxiv.org/abs/quant-ph/9605043} \end{APACrefURL}
\newblock
\begin{APACrefDOI} \doi{10.48550/ARXIV.QUANT-PH/9605043} \end{APACrefDOI}
\PrintBackRefs{\CurrentBib}

\bibitem [\protect \citeauthoryear {%
Izumi%
\ \BBA {} Gall%
}{%
Izumi%
\ \BBA {} Gall%
}{%
{\protect \APACyear {2019}}%
}]{%
IzumiLeGallMag2019_APSP_QuantumDist}
\APACinsertmetastar {%
IzumiLeGallMag2019_APSP_QuantumDist}%
\begin{APACrefauthors}%
Izumi, T.%
\BCBT {}\ \BBA {} Gall, F\BPBI L.%
\end{APACrefauthors}%
\unskip\
\newblock
\APACrefYearMonthDay{2019}{jul}{}.
\newblock
{\BBOQ}\APACrefatitle {{Quantum Distributed Algorithm for the All-Pairs Shortest Path Problem in the CONGEST-CLIQUE Model}} {{Quantum Distributed Algorithm for the All-Pairs Shortest Path Problem in the CONGEST-CLIQUE Model}}.{\BBCQ}
\newblock
\BIn{} \APACrefbtitle {Proceedings of the 2019 {ACM} Symposium on PODC.} {Proceedings of the 2019 {ACM} symposium on podc.}
\newblock
\APACaddressPublisher{}{{ACM}}.
\newblock
\begin{APACrefURL} \url{https://arxiv.org/abs/1906.02456} \end{APACrefURL}
\newblock
\begin{APACrefDOI} \doi{10.1145/3293611.3331628} \end{APACrefDOI}
\PrintBackRefs{\CurrentBib}

\bibitem [\protect \citeauthoryear {%
Izumi%
, Le~Gall%
\BCBL {}\ \BBA {} Magniez%
}{%
Izumi%
\ \protect \BOthers {.}}{%
{\protect \APACyear {2020}}%
}]{%
IzumiLeGall2020_TriangleFinding}
\APACinsertmetastar {%
IzumiLeGall2020_TriangleFinding}%
\begin{APACrefauthors}%
Izumi, T.%
, Le~Gall, F.%
\BCBL {}\ \BBA {} Magniez, F.%
\end{APACrefauthors}%
\unskip\
\newblock
\APACrefYearMonthDay{2020}{}{}.
\newblock
\APACrefbtitle {{Quantum Distributed Algorithm for Triangle Finding in the CONGEST Model}.} {{Quantum Distributed Algorithm for Triangle Finding in the CONGEST Model}.}
\newblock
\APACaddressPublisher{}{Schloss Dagstuhl - Leibniz-Zentrum für Informatik}.
\newblock
\begin{APACrefURL} \url{https://drops.dagstuhl.de/opus/volltexte/2020/11884/} \end{APACrefURL}
\newblock
\begin{APACrefDOI} \doi{10.4230/LIPICS.STACS.2020.23} \end{APACrefDOI}
\PrintBackRefs{\CurrentBib}

\bibitem [\protect \citeauthoryear {%
Jain%
}{%
Jain%
}{%
{\protect \APACyear {1998}}%
}]{%
Jain1998_2_approx_general_steiner}
\APACinsertmetastar {%
Jain1998_2_approx_general_steiner}%
\begin{APACrefauthors}%
Jain, K\BPBI K.%
\end{APACrefauthors}%
\unskip\
\newblock
\APACrefYearMonthDay{1998}{}{}.
\newblock
{\BBOQ}\APACrefatitle {A Factor 2 Approximation Algorithm for the Generalized Steiner Network Problem} {A factor 2 approximation algorithm for the generalized steiner network problem}.{\BBCQ}
\newblock
\APACjournalVolNumPages{Combinatorica}{21}{}{39-60}.
\newblock
\begin{APACrefURL} \url{https://api.semanticscholar.org/CorpusID:14782509} \end{APACrefURL}
\PrintBackRefs{\CurrentBib}

\bibitem [\protect \citeauthoryear {%
Kerger%
, {Bernal Neira}%
, {Gonzalez Izquierdo}%
\BCBL {}\ \BBA {} Rieffel%
}{%
Kerger%
\ \protect \BOthers {.}}{%
{\protect \APACyear {2023}}%
}]{%
Kerger2023_mind_the_O-tilde}
\APACinsertmetastar {%
Kerger2023_mind_the_O-tilde}%
\begin{APACrefauthors}%
Kerger, P\BPBI A.%
, {Bernal Neira}, D\BPBI E.%
, {Gonzalez Izquierdo}, Z.%
\BCBL {}\ \BBA {} Rieffel, E\BPBI G.%
\end{APACrefauthors}%
\unskip\
\newblock
\APACrefYearMonthDay{2023}{}{}.
\newblock
{\BBOQ}\APACrefatitle {Mind the O˜: Asymptotically Better, but Still Impractical, Quantum Distributed Algorithms} {Mind the o˜: Asymptotically better, but still impractical, quantum distributed algorithms}.{\BBCQ}
\newblock
\APACjournalVolNumPages{Algorithms}{16}{}{332}.
\newblock
\begin{APACrefURL} \url{https://api.semanticscholar.org/CorpusID:259894813} \end{APACrefURL}
\PrintBackRefs{\CurrentBib}

\bibitem [\protect \citeauthoryear {%
Korhonen%
\ \BBA {} Suomela%
}{%
Korhonen%
\ \BBA {} Suomela%
}{%
{\protect \APACyear {2017}}%
}]{%
TowardsCCM_Complexity_KorhonenSumoela2018}
\APACinsertmetastar {%
TowardsCCM_Complexity_KorhonenSumoela2018}%
\begin{APACrefauthors}%
Korhonen, J\BPBI H.%
\BCBT {}\ \BBA {} Suomela, J.%
\end{APACrefauthors}%
\unskip\
\newblock
\APACrefYearMonthDay{2017}{}{}.
\newblock
\APACrefbtitle {{Towards a complexity theory for the congested clique}.} {{Towards a complexity theory for the congested clique}.}
\newblock
\APACaddressPublisher{}{arXiv}.
\newblock
\begin{APACrefURL} \url{https://arxiv.org/abs/1705.03284} \end{APACrefURL}
\newblock
\begin{APACrefDOI} \doi{10.48550/ARXIV.1705.03284} \end{APACrefDOI}
\PrintBackRefs{\CurrentBib}

\bibitem [\protect \citeauthoryear {%
Kou%
, Markowsky%
\BCBL {}\ \BBA {} Berman%
}{%
Kou%
\ \protect \BOthers {.}}{%
{\protect \APACyear {1981}}%
}]{%
Kou1981_fast_algo_for_ST_}
\APACinsertmetastar {%
Kou1981_fast_algo_for_ST_}%
\begin{APACrefauthors}%
Kou, L\BPBI T.%
, Markowsky, G.%
\BCBL {}\ \BBA {} Berman, L.%
\end{APACrefauthors}%
\unskip\
\newblock
\APACrefYearMonthDay{1981}{}{}.
\newblock
{\BBOQ}\APACrefatitle {{A fast algorithm for Steiner trees}} {{A fast algorithm for Steiner trees}}.{\BBCQ}
\newblock
\APACjournalVolNumPages{Acta Informatica}{15}{}{141--145}.
\newblock
\begin{APACrefURL} \url{http://aturing.umcs.maine.edu/~markov/SteinerTrees.pdf} \end{APACrefURL}
\PrintBackRefs{\CurrentBib}

\bibitem [\protect \citeauthoryear {%
Lenzen%
}{%
Lenzen%
}{%
{\protect \APACyear {2012}}%
}]{%
Lenzen2012_OptimalRoutingSortingInCongestClique}
\APACinsertmetastar {%
Lenzen2012_OptimalRoutingSortingInCongestClique}%
\begin{APACrefauthors}%
Lenzen, C.%
\end{APACrefauthors}%
\unskip\
\newblock
\APACrefYearMonthDay{2012}{}{}.
\newblock
\APACrefbtitle {{Optimal Deterministic Routing and Sorting on the Congested Clique}.} {{Optimal Deterministic Routing and Sorting on the Congested Clique}.}
\newblock
\APACaddressPublisher{}{arXiv}.
\newblock
\begin{APACrefDOI} \doi{10.48550/ARXIV.1207.1852} \end{APACrefDOI}
\PrintBackRefs{\CurrentBib}

\bibitem [\protect \citeauthoryear {%
Nanongkai%
}{%
Nanongkai%
}{%
{\protect \APACyear {2014}}%
}]{%
nanongkai2014distributed}
\APACinsertmetastar {%
nanongkai2014distributed}%
\begin{APACrefauthors}%
Nanongkai, D.%
\end{APACrefauthors}%
\unskip\
\newblock
\APACrefYearMonthDay{2014}{}{}.
\newblock
\APACrefbtitle {Distributed Approximation Algorithms for Weighted Shortest Paths.} {Distributed approximation algorithms for weighted shortest paths.}
\PrintBackRefs{\CurrentBib}

\bibitem [\protect \citeauthoryear {%
Nowicki%
}{%
Nowicki%
}{%
{\protect \APACyear {2019}}%
}]{%
MST_in_O1_CCM_Nowicki2019}
\APACinsertmetastar {%
MST_in_O1_CCM_Nowicki2019}%
\begin{APACrefauthors}%
Nowicki, K.%
\end{APACrefauthors}%
\unskip\
\newblock
\APACrefYearMonthDay{2019}{}{}.
\newblock
\APACrefbtitle {{A Deterministic Algorithm for the MST Problem in Constant Rounds of Congested Clique}.} {{A Deterministic Algorithm for the MST Problem in Constant Rounds of Congested Clique}.}
\newblock
\APACaddressPublisher{}{arXiv}.
\newblock
\begin{APACrefDOI} \doi{10.48550/ARXIV.1912.04239} \end{APACrefDOI}
\PrintBackRefs{\CurrentBib}

\bibitem [\protect \citeauthoryear {%
Rieffel%
\ \BBA {} Polak%
}{%
Rieffel%
\ \BBA {} Polak%
}{%
{\protect \APACyear {2011}}%
}]{%
RieffelGentleIntroductionToQuantum}
\APACinsertmetastar {%
RieffelGentleIntroductionToQuantum}%
\begin{APACrefauthors}%
Rieffel, E.%
\BCBT {}\ \BBA {} Polak, W.%
\end{APACrefauthors}%
\unskip\
\newblock
\APACrefYear{2011}.
\newblock
\APACrefbtitle {{Quantum Computing: A Gentle Introduction}} {{Quantum Computing: A Gentle Introduction}}\ (\PrintOrdinal{1st}\ \BEd).
\newblock
\APACaddressPublisher{}{The MIT Press}.
\PrintBackRefs{\CurrentBib}

\bibitem [\protect \citeauthoryear {%
Saikia%
\ \BBA {} Karmakar%
}{%
Saikia%
\ \BBA {} Karmakar%
}{%
{\protect \APACyear {2019}}%
}]{%
Saikia_Karmakar2019_SteinerTree_CONGESTCLIQUE}
\APACinsertmetastar {%
Saikia_Karmakar2019_SteinerTree_CONGESTCLIQUE}%
\begin{APACrefauthors}%
Saikia, P.%
\BCBT {}\ \BBA {} Karmakar, S.%
\end{APACrefauthors}%
\unskip\
\newblock
\APACrefYearMonthDay{2019}{}{}.
\newblock
\APACrefbtitle {{Distributed Approximation Algorithms for Steiner Tree in the CONGESTED CLIQUE}.} {{Distributed Approximation Algorithms for Steiner Tree in the CONGESTED CLIQUE}.}
\newblock
\APACaddressPublisher{}{arXiv}.
\newblock
\begin{APACrefURL} \url{https://arxiv.org/abs/1907.12011} \end{APACrefURL}
\newblock
\begin{APACrefDOI} \doi{10.48550/ARXIV.1907.12011} \end{APACrefDOI}
\PrintBackRefs{\CurrentBib}

\bibitem [\protect \citeauthoryear {%
Shor%
}{%
Shor%
}{%
{\protect \APACyear {1995}}%
}]{%
Shor1995_og_paper}
\APACinsertmetastar {%
Shor1995_og_paper}%
\begin{APACrefauthors}%
Shor, P\BPBI W.%
\end{APACrefauthors}%
\unskip\
\newblock
\APACrefYearMonthDay{1995}{}{}.
\newblock
{\BBOQ}\APACrefatitle {Polynomial-Time Algorithms for Prime Factorization and Discrete Logarithms on a Quantum Computer} {Polynomial-time algorithms for prime factorization and discrete logarithms on a quantum computer}.{\BBCQ}
\newblock
\APACjournalVolNumPages{SIAM Rev.}{41}{}{303-332}.
\newblock
\begin{APACrefURL} \url{https://api.semanticscholar.org/CorpusID:2337707} \end{APACrefURL}
\PrintBackRefs{\CurrentBib}

\bibitem [\protect \citeauthoryear {%
van Apeldoorn%
\ \BBA {} de Vos%
}{%
van Apeldoorn%
\ \BBA {} de Vos%
}{%
{\protect \APACyear {2022}}%
}]{%
vanApeldoorn2022_DistQuantQueriesCONGEST}
\APACinsertmetastar {%
vanApeldoorn2022_DistQuantQueriesCONGEST}%
\begin{APACrefauthors}%
van Apeldoorn, J.%
\BCBT {}\ \BBA {} de Vos, T.%
\end{APACrefauthors}%
\unskip\
\newblock
\APACrefYearMonthDay{2022}{}{}.
\newblock
\APACrefbtitle {{A Framework for Distributed Quantum Queries in the CONGEST Model}.} {{A Framework for Distributed Quantum Queries in the CONGEST Model}.}
\newblock
\APACaddressPublisher{}{arXiv}.
\newblock
\begin{APACrefURL} \url{https://arxiv.org/abs/2202.10969} \end{APACrefURL}
\newblock
\begin{APACrefDOI} \doi{10.48550/ARXIV.2202.10969} \end{APACrefDOI}
\PrintBackRefs{\CurrentBib}

\bibitem [\protect \citeauthoryear {%
Williamson%
, Goemans%
, Mihail%
\BCBL {}\ \BBA {} Vazirani%
}{%
Williamson%
\ \protect \BOthers {.}}{%
{\protect \APACyear {1993}}%
}]{%
Williamson1993_primal_dual_approx_for_general_steiner}
\APACinsertmetastar {%
Williamson1993_primal_dual_approx_for_general_steiner}%
\begin{APACrefauthors}%
Williamson, D\BPBI P.%
, Goemans, M\BPBI X.%
, Mihail, M.%
\BCBL {}\ \BBA {} Vazirani, V\BPBI V.%
\end{APACrefauthors}%
\unskip\
\newblock
\APACrefYearMonthDay{1993}{}{}.
\newblock
{\BBOQ}\APACrefatitle {A primal-dual approximation algorithm for generalized steiner network problems} {A primal-dual approximation algorithm for generalized steiner network problems}.{\BBCQ}
\newblock
\APACjournalVolNumPages{Combinatorica}{15}{}{435-454}.
\newblock
\begin{APACrefURL} \url{https://api.semanticscholar.org/CorpusID:11849728} \end{APACrefURL}
\PrintBackRefs{\CurrentBib}

\bibitem [\protect \citeauthoryear {%
Wu%
\ \BBA {} Zhong%
}{%
Wu%
\ \BBA {} Zhong%
}{%
{\protect \APACyear {2023}}%
}]{%
Wu2023Fusion_blossom}
\APACinsertmetastar {%
Wu2023Fusion_blossom}%
\begin{APACrefauthors}%
Wu, Y.%
\BCBT {}\ \BBA {} Zhong, L.%
\end{APACrefauthors}%
\unskip\
\newblock
\APACrefYearMonthDay{2023}{}{}.
\newblock
{\BBOQ}\APACrefatitle {Fusion Blossom: Fast MWPM Decoders for QEC} {Fusion blossom: Fast mwpm decoders for qec}.{\BBCQ}
\newblock
\APACjournalVolNumPages{ArXiv}{abs/2305.08307}{}{}.
\newblock
\begin{APACrefURL} \url{https://api.semanticscholar.org/CorpusID:258685605} \end{APACrefURL}
\PrintBackRefs{\CurrentBib}

\bibitem [\protect \citeauthoryear {%
Zwick%
}{%
Zwick%
}{%
{\protect \APACyear {2000}}%
}]{%
witness_ref_Zwick2000}
\APACinsertmetastar {%
witness_ref_Zwick2000}%
\begin{APACrefauthors}%
Zwick, U.%
\end{APACrefauthors}%
\unskip\
\newblock
\APACrefYearMonthDay{2000}{}{}.
\newblock
\APACrefbtitle {{All Pairs Shortest Paths using Bridging Sets and Rectangular Matrix Multiplication}.} {{All Pairs Shortest Paths using Bridging Sets and Rectangular Matrix Multiplication}.}
\newblock
\APACaddressPublisher{}{arXiv}.
\newblock
\begin{APACrefDOI} \doi{10.48550/ARXIV.CS/0008011} \end{APACrefDOI}
\PrintBackRefs{\CurrentBib}

\end{thebibliography}
}

\end{document}